\newcommand{\ket}[1]{\left |  #1 \right \rangle}
\newcommand{\bra}[1]{\left \langle #1 \right |}
\newcommand{\proj}[1]{\ket{#1}\!\bra{#1}}
\newcommand{\ketbra}[2]{\left | #1 \rangle \langle #2 \right |}
\newcommand{\norm}[1]{\left|\left|#1\right|\right|}
\newcommand{\II}{\mathbb{I}}
\newcommand{\Tr}{\operatorname{Tr}}
\newtheorem{definition}{Definition}
\newtheorem{theorem}{Theorem}
\newtheorem{lemma}{Lemma}
\newtheorem{protocol}{Protocol}
\begin{document}

\title{Insider-proof encryption with applications for quantum key distribution}
\author{
\begin{tabular}{cc}
Matthew McKague\footnote{\url{matthew.mckague@otago.ac.nz}} 
&
Lana Sheridan\footnote{\url{cqtlss@nus.edu.sg}} \\ 
Department of Physics &
Centre for Quantum Technologies \\ 
University of Otago &
National University of Singapore 
\end{tabular}
}

\maketitle

\begin{abstract}
It has been pointed out~\cite{BCK12} that current protocols for device independent quantum key distribution can leak key to the adversary when devices are used repeatedly and that this issue has not been addressed.  We introduce the notion of an insider-proof channel. This allows us to propose a means by which devices with memories could be reused from one run of a device independent quantum key distribution protocol to the next while bounding the leakage to Eve, under the assumption that one run of the protocol could be completed securely using devices with memories. 
\end{abstract}

\section{Introduction}

Quantum key distribution protocols allow two distant parties who share some small initial key to grow new shared randomness.  Proofs of security for these protocols make assumptions about the behaviour of the devices that the two parties, Alice and Bob, use.  Device-independent quantum key distribution (DIQKD)~\cite{AGM06,ABGMPS07,devindepfull} is a concept for protocols that makes very few assumptions about Alice and Bob's devices for generating their classical measurement outcomes.  They should be able to certify that they have a secure key from the statistics of their measurement outcomes alone, however, they still need to assume that there are no side channels that can signal from their private laboratories to an eavesdropper, Eve.  DIQKD protocols have the important advantage that even if the measurement devices (or the source of quantum states, the control of which we give to Eve) do not operate as intended, the protocols can still certify whether a generated key is secure.  

Here we are interested in removing even more assumptions about the operation of the measurement devices: we allow that they may have an internal memory which can store arbitrary amounts of quantum or classical information, and that they may have been built by the eavesdropper (subject to the usual signalling restrictions).  In this very untrusting model, the question has been raised~\cite{BCK12} whether the measurement devices can be reused.  In this paper, we consider this question, but do not consider how DIQKD might be accomplished in a single round of a protocol using an adversarial device with memory.

Our main contribution is to describe an encryption scheme which allows Alice and Bob to exchange data which is determined by the devices across a public channel without leaking information from the devices to Eve.  The encryption remains secure even if the devices have complete information about Alice and Bob's shared secret keys (generated in previous rounds of the protocol) and even if the devices have complete control over the message sent.  In the context of DIQKD, this allows Alice and Bob to exchange parameter estimation and error correction data without the devices leaking information about previously generated keys to Eve.  This is accomplished using locally generated randomness (independent of the devices) and hash functions to generate encryption keys.

The layout of this paper is as follows.  In the next section, we describe what an ideal private channel consists of and introduce a new cryptographic concept: the $\epsilon$-insider proof channel.  This channel allows the secure transmission of a message, even if the message is chosen adversarially.  In part~\ref{sec:implch} of this section we give a recipe for implementing this channel and in part~\ref{sec:secch} demonstrate the channel does allow secure message transmission except with probability $\epsilon$.  

In section~\ref{sec:model}, we begin to look at applying the $\epsilon$-insider proof channel to quantum key distribution and we specify and motivate the security model we are working in.  Following that, we outline the modifications to a DIQKD protocol in section~\ref{sec:protocol}, and in section~\ref{sec:qkdsec} it is shown that the DIQKD protocol is still secure with the modifications that use the new channel.  The composition of repeated DIQKD rounds is considered in section~\ref{sec:bounds}.  Section~\ref{sec:accounting} gives the asymptotic key rate achieved by these bounds.  Lastly, in section~\ref{sec:aborts} we discuss how protocol aborts need to be managed, touch on the composability implications, and present some conclusions.


\section{The private channel}
\label{sec:channel}
We first describe the scenario and define the ideal insider-proof private channel.  We then give a protocol and prove that it approximates the ideal channel.  The proof relies on 2-universal hashing and the quantum leftover hashing lemma~\cite{TSSR11}.

\subsection{The ideal channel}
Let us define a situation where Alice wishes to privately communicate some information to Bob in the presence of a quantum eavesdropper, Eve, who wishes to obtain access to some of Alice's data.  Further, there is an insider $A^{\prime}$ who has access to Alice's private keys and data, and who can choose some messages to be sent to Bob.  That is, there will be some encrypted channel from Alice to Bob and $A^{\prime}$ can choose some inputs to the channel.  However, $A^{\prime}$ has no other means of communicating with Eve.  Alice and Bob's task is to complete their private communication in such a way that Eve cannot gain any information about Alice's data or the message.

\subsubsection{Definitions}
To begin, let us describe the registers we will use.  $A$ contains the message Alice will send, while $B$ is the register which will hold the final message for Bob.  $C$ is the ciphertext, or otherwise contains all the raw information leaked to Eve during the protocol.  For example, if the channel is implemented using a public channel then $C$ contains all information sent over the channel. $D$ contains secret information that Alice does not want to leak.  Finally, $E$ contains Eve's quantum side information.  We assume that the length of $A$, $B$ and $C$ are public.

\begin{definition}
The \emph{ideal private channel} between $A$ and $B$ is defined as
\begin{equation}
\Phi^{IPC}_{AB}(\rho) = \sum_{x,y} \left(\proj{x}_{A} \otimes \ketbra{x}{y}_{B} \right) \rho
\left( 
\proj{x}_{A} \otimes \ketbra{y}{x}_{B} 
\right)
\end{equation}
\end{definition}
So, the ideal private channel erases $B$ and copies the contents of $A$ into it.

\begin{definition}\label{def:ippc}
A channel $\Phi_{ABC}$ is an \emph{$\epsilon$-insider-proof private channel} from $A$ to $B$ if there exists a channel $\Psi_{CE}$ such that for all $CCCCQ$ states $\rho_{ABCDE}$, 
\begin{equation}
\norm{\Phi_{AB}^{IPC} \otimes I_{D} \otimes \Psi_{CE}(\rho) - 
\Phi_{ABC} \otimes I_{DE}(\rho)}_{1} \leq \epsilon
\end{equation}
Furthermore, if this is true for $\epsilon = 0$ then we say that $\Phi$ is an \emph{insider-proof private channel}.

\end{definition}
This definition essentially says that we consider a channel secure if we can approximate it with the ideal channel, along with some simulator that generates a transcript for Eve without referring to the secret data.

Although we have not explicitly stated that the insider can choose the message, this is built into the fact that we allow \emph{any} $\rho$, and hence this covers the cases where the insider has deliberately correlated the registers $A$, $D$ and its memory $A^{\prime}$, possibly using some quantum measurement on half of an entangled $A^{\prime}E$ state.

Finally, in the case where we wish to implement such a channel using some additional resources, such as a shared key or private randomness, we may extend $\rho$ with additional registers and add conditions as necessary to specify the form of the resources.  In keeping with the spirit of the definition, we will only consider resources where the insider has access to any stored data, including shared private keys.

It is interesting to note that, compared with the usual definition of a private channel, the only difference is that the private key is allowed to be correlated with the message.

\subsubsection{A naive attempt}
To construct an insider-proof private channel is non-trivial.  Consider a naive application of the one-time pad where Alice simply XORs her message with a shared private key and sends it to Bob over a public channel.  This  can be decrypted by Bob and uses a key unknown to Eve.  Let $d$ be some of Alice's data, and $k$ be the key.  The insider can choose to send the message $m = d \oplus k$ to Bob.  Alice then encrypts the message as $m \oplus k = d$, which is broadcast to Bob and Eve.  Hence Eve can trivially recover $d$.  Note that Eve could not build her final state by herself since it is correlated with the register $D$, to which she has no access.

Clearly if Alice uses an encryption scheme which depends only on her stored data (which the insider has access to), the insider can, at least information theoretically, reverse the encryption and choose a message which results in a ciphertext that directly reveals data to Eve.  Hence we introduce some randomness in the form of a true random number generator.  Then Alice encrypts her message by using some function which depends on a random string which is chosen \emph{after} the message is chosen.  In this case the insider cannot predict what the encryption function will be.  It is thus our task to find a suitable function.

\subsection{The channel}
\label{sec:implch}
In order to achieve our goal, we must use some additional resources in the form of a shared private key and a true random number generator on Alice's side.  The shared key is in register $K$ while Alice's private random string is held in $R$.  The initial state must satisfy $\rho_{KRBCDE} = U^{(n)}_{K} \otimes U_{R}^{(n)} \otimes\rho_{BCDE}$ and $\rho_{RABCDE} = U^{(n)}_{R} \otimes \rho_{ABCDE}$, where $U^{(n)}$ is the completely mixed state on $n$ qubits.  Note that the message can be correlated with the shared key, but the rest of the state cannot.  As well, the private randomness is uncorrelated with all other registers.  Our protocol is summarized as follows.

\begin{protocol}\label{protocol}
Input for Alice: strings $a,k$.  Input for Bob: string $k$.
\begin{enumerate}
	\item Alice chooses a string $r$ uniformly at random.
	\item Alice calculates $c  = a \oplus \left[(k \cdot r) \mod 2^{\ell}\right]$ and discards $k$.
	\item Alice broadcasts $(c,r)$ and then discards them.
	\item Bob reconstructs $a = c \oplus \left[(k \cdot r) \mod 2^{\ell}\right]$ and then discards $k$, $r$ and $c$.
\end{enumerate}
\end{protocol}

In order to prove that protocol~\ref{protocol} produces an $\epsilon$-insider-proof private channel we first introduce 2-universal hash functions.

\subsection{Security of the channel}
\label{sec:secch}

We first sketch the proof, then provide the technical details.  The insider $A^{\prime}$ chooses some message $A$ with full knowledge of the shared key $K$.  Hence the message can be correlated with $K$.  However, we will use a $K$ of length more than twice that of $|A|$ ($|K| > 2|A|$) so that there are still $> |A|$ bits of randomness in $K$, even conditioned on $A$.  Now when we produce the encryption key $K^{\prime}$ by combining $K$ with $R$, we produce a $K^{\prime}$ of length $|A|$.  The leftover hashing lemma (stated below) then says that $K^{\prime}$ is almost completely random, even conditioned on $A$ and $R$.  The ciphertext is then also completely random, even conditioned on $A$ and $R$ and Eve will not be able to figure out anything about $A$ from $R$ and the ciphertext.

\subsubsection{2-universal hash functions}\label{sec:hash}
2-universal hash functions are in fact families of functions which, given a random seed, produce a very uniform output.

\begin{definition}
A \emph{2-universal} family of functions $\mathcal{F}$ is a family of functions $f: \mathcal{X} \rightarrow \mathcal{Y}$ such that, when $f$ is drawn uniformly at random from $\mathcal{F}$, for every $x_{1}, x_{2} \in \mathcal{X}$
\begin{equation}
P(f(x_{1}) = f(x_{2})) = \frac{1}{|\mathcal{Y}|}
\end{equation}
¥
\end{definition}

Protocol \ref{protocol} uses the following 2-universal family of hash functions introduced in \cite{CW79}.

\begin{lemma}\label{lemma:2universal}
The family of functions given by $f_{r}(k) = k \cdot r \mod 2^{\ell}$  is 2-unversal.
\end{lemma}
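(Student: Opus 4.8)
The plan is to verify the definition directly. Drawing $r$ uniformly at random, it suffices to show that for any two \emph{distinct} inputs $k_{1} \neq k_{2}$ the collision probability $P(f_{r}(k_{1}) = f_{r}(k_{2}))$ equals $1/|\mathcal{Y}| = 2^{-\ell}$; the stated definition is of course only non-trivial in this case, since for $k_{1} = k_{2}$ the probability is $1$. The whole argument thus reduces to a single probability computation over the random seed $r$, and I would organise it around that one collision count.

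The main obstacle is fixing the correct algebraic setting, which is genuinely the crux here. The strings $k$ and $r$ of length $n = |K|$ must be read as elements of the finite field $\mathrm{GF}(2^{n})$, with $k \cdot r$ the field product and the reduction $\bmod\, 2^{\ell}$ the truncation keeping the low $\ell$ bits (an $\mathbb{F}_{2}$-linear surjection onto $\{0,1\}^{\ell}$). This identification matters because it is the field structure, not ordinary integer arithmetic, that makes the \emph{equality} in the definition hold exactly: truncating an integer product does not give a uniform collision probability, as inputs differing by an even number then collide with probability strictly larger than $2^{-\ell}$. Reading $\cdot$ as multiplication in $\mathrm{GF}(2^{n})$ removes this obstruction, so once the interpretation is committed to, the rest is routine.

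With that setting fixed, I would argue as follows. The event $f_{r}(k_{1}) = f_{r}(k_{2})$ holds iff the low $\ell$ bits of $k_{1}\cdot r$ and $k_{2}\cdot r$ agree; since truncation is bitwise it commutes with $\oplus$, so this is equivalent to the low $\ell$ bits of $(k_{1}\oplus k_{2})\cdot r = \delta \cdot r$ vanishing, where $\delta = k_{1}\oplus k_{2} \neq 0$. As $\delta$ is a nonzero field element, the map $r \mapsto \delta \cdot r$ is a bijection of $\mathrm{GF}(2^{n})$, so $\delta \cdot r$ is uniform whenever $r$ is. Exactly $2^{n-\ell}$ of the $2^{n}$ field elements have vanishing low $\ell$ bits (the remaining $n-\ell$ bits being free, and $n > 2\ell > \ell$), which yields $P(f_{r}(k_{1}) = f_{r}(k_{2})) = 2^{n-\ell}/2^{n} = 2^{-\ell} = 1/|\mathcal{Y}|$, establishing 2-universality. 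The two facts that carry the proof, invariance of truncation under $\oplus$ and bijectivity of multiplication by a nonzero element, are both immediate once the field interpretation is in place, so no further difficulty remains.
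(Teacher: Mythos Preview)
Your proposal is correct and follows essentially the same route as the paper: both reduce the collision event to the low $\ell$ bits of $(k_{1}\oplus k_{2})\cdot r$ vanishing, invoke invertibility of the nonzero difference in $\mathrm{GF}(2^{n})$, and count the $2^{n-\ell}$ solutions. Your write-up is in fact more explicit than the paper's about the algebraic setting (field multiplication with truncation as an $\mathbb{F}_{2}$-linear map), which is an improvement; the paper phrases the same count via an equivalence class $\bmod\, b$ with $b = x^{\ell}$, but the substance is identical.
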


\begin{proof}

Let $x_{1} \neq x_{2}$ be given.  We wish to count the $r$ for which
\begin{equation}
(r \cdot x_{1}) = (r \cdot x_{2}) \mod 2^{\ell}.
\end{equation}
Taking the expression $\mod 2^{\ell}$, i.e. taking the $\ell$ least significant bits of the string, can be expressed as taking the expression $\mod b$ for some element\footnote{In particular, the element $x^{\ell}$ in the usual polynomial representation.} $b$ in $\text{GF}(2^{n})$.  Hence we can rewrite this as
\begin{equation}
r \cdot ( x_{1} \oplus  x_{2}) = 0  \mod b.
\end{equation}
Since the multiplication is over a field, $r = 0 \mod b$ and there is one solution for every member of the equivalence class of $0$, of which there are $2^{n-\ell}$ members.  Hence the fraction of strings $r$ that are solutions is $2^{n-\ell} / 2^{n} = 2^{\ell}$ and the family of functions is 2-universal.
\end{proof}

Note that the family is symmetric in the roles of $r$ and $k$, so we can use $k$ as the seed instead of $r$ and the family is still 2-universal.  The distinction becomes important in the following lemma, which gives a useful approximation of how uniform the output of the hash function is.

\begin{lemma}[Quantum leftover hashing lemma]~\cite{TSSR11}
\label{lemma:leftoverhashing}
Let $X$ and $E$ be random variables.  Let $\mathcal{F}$ be a family of 2-universal hash functions, indexed by a seed $R$ such that $f_{R} \in \mathcal{F}$, that take an input $X\in\{0,1\}^n$, and output $Z\in\{0,1\}^\ell$.  Then averaged over $f_R$, the distribution on $Z$ has the property:
\begin{equation}
\Delta(Z|ER) \leq \epsilon' + \frac{1}{2} \sqrt{2^{\ell - H_{\min}^{\epsilon'}(X|E)}} \, ,
\end{equation}
where the distance from uniform, $\Delta$, is given by
\begin{equation}
\Delta(A|B)_{\rho} = \min_{\sigma_B} \frac{1}{2} \left\lVert \rho_{AB} - \omega_A \otimes \sigma_B  \right\rVert_1 \, .
\end{equation}
\end{lemma}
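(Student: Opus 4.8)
The plan is to reduce the trace-distance quantity $\Delta(Z|ER)$ to a Hilbert--Schmidt (2-norm) computation, in which the 2-universality of $\mathcal{F}$ can be applied directly, and to absorb the smoothing parameter $\epsilon'$ through a separate triangle-inequality step. I would follow the structure of the argument in~\cite{TSSR11}, itself a refinement of Renner's original leftover-hashing proof.

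First I would dispose of the smoothing. Let $\tilde\rho_{XE}$ be a state $\epsilon'$-close to $\rho_{XE}$ that attains the smooth min-entropy, so that $H_{\min}(X|E)_{\tilde\rho} = H_{\min}^{\epsilon'}(X|E)_{\rho}$. Since applying $f_R$ is a quantum channel and the trace distance cannot increase under channels, the hashed states $\tilde\rho_{ZER}$ and $\rho_{ZER}$ remain $\epsilon'$-close. By the triangle inequality it is then enough to bound $\Delta(Z|ER)$ for $\tilde\rho$, which now carries an un-smoothed min-entropy guarantee; the replacement $\rho \to \tilde\rho$ contributes exactly the additive $\epsilon'$ term.

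For the core estimate I would introduce the $\sigma_E$-weighted 2-norm, where $\sigma_E$ is the marginal optimizing the min-entropy for $\tilde\rho$. Step one is a Cauchy--Schwarz bound relating the trace norm of $\tilde\rho_{ZER} - \omega_Z \otimes \sigma_E \otimes \omega_R$ to this weighted 2-norm; the crucial point of Renner's weighting is that only the classical register $Z$, which ranges over $2^\ell$ values, contributes a dimension factor, so that $\sqrt{2^\ell}$ (with the overall prefactor $\tfrac12$ from the definition of $\Delta$) appears here and no worse factor from $E$ or $R$. Step two is to expand the weighted 2-norm and average over the seed $R$: the defining identity $P(f_R(x_1)=f_R(x_2)) = 1/|\mathcal{Y}|$ of a 2-universal family collapses the collision contribution to its uniform value, leaving a residual purity-type quantity controlled by the conditional collision (R\'enyi-$2$) entropy $H_2(X|E)_{\tilde\rho}$. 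Finally, using $H_2(X|E) \ge H_{\min}(X|E)$ replaces this by $2^{-H_{\min}^{\epsilon'}(X|E)}$, and assembling the two steps yields $\Delta(Z|ER) \le \epsilon' + \tfrac12\sqrt{2^{\ell - H_{\min}^{\epsilon'}(X|E)}}$.

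I expect the main obstacle to be the proper handling of the quantum side information $E$. In the classical leftover-hashing lemma the collision term is an honest probability, but here it must be computed with $E$ as a quantum register, so the 2-universality property, which speaks only about the classical inputs $x_1, x_2$, has to be reconciled with the quantum conditioning. The technical heart is choosing the $\sigma_E^{-1/4}$ weighting so that the Cauchy--Schwarz step produces precisely $\sqrt{2^\ell}$ and the expansion of the weighted norm reproduces exactly the $\Tr$-quantities that the definition of $H_2(X|E)$ bounds. Getting the weighting and the averaging over $R$ to interlock cleanly is the delicate part; the remaining manipulations are routine bookkeeping.
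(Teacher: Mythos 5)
The paper does not actually prove this lemma: it is imported verbatim from~\cite{TSSR11}, so there is no in-paper proof to compare against. Your sketch is a correct outline of the standard argument in that reference (and in Renner's thesis): the smoothing is absorbed by a triangle-inequality-plus-data-processing step contributing the additive $\epsilon'$, and the core bound comes from a $\sigma_E$-weighted two-norm and Cauchy--Schwarz estimate in which 2-universality collapses the collision term to the conditional collision entropy $H_2(X|E)$, which is then lower-bounded by $H_{\min}(X|E)$ --- so your proposal is consistent with the cited source rather than an alternative to anything done in this paper.
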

%

\subsubsection{Proof of security}

\begin{theorem}
Let $\ell$ and $n > 2\ell$ be given. Then protocol~\ref{protocol} implements an $\epsilon$-insider-proof secure channel where 
\begin{equation}
\epsilon = \sqrt{2^{2\ell - n}}
\end{equation}

\label{thm1}
\end{theorem}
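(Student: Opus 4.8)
The plan is to exhibit an explicit simulator $\Psi_{CE}$ and show, via the leftover hashing lemma, that the transcript it produces is indistinguishable from the one Protocol~\ref{protocol} actually generates. The natural choice is the simulator that ignores its input and writes into $C$ a fresh pair $(\tilde{c},\tilde{r})$ with $\tilde{c}$ uniform on $\{0,1\}^{\ell}$ and $\tilde{r}$ uniform on $\{0,1\}^{n}$, independent of $E$. Since Bob reconstructs $a = c \oplus [(k\cdot r)\bmod 2^{\ell}]$ exactly, the protocol channel sets $B=A$ with certainty, which is precisely what $\Phi^{IPC}_{AB}$ does, while $I_D$ and $I_{DE}$ leave $D$ and $E$ untouched. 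Hence the two states compared in Definition~\ref{def:ippc} agree on the classical registers $A$, $B$, $D$ and on $E$, and the whole problem reduces to showing that the real ciphertext register $C=(a\oplus K',\,r)$, with $K' = (k\cdot r)\bmod 2^{\ell}$, is $\epsilon$-close in trace norm to $\omega_C\otimes\rho_{ADE}$, i.e. uniform and independent of the remaining registers.

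First I would establish the min-entropy bound that contains the insider's power. By the resource conditions, tracing out $A$ leaves $K$ completely mixed and uncorrelated with $D$ and $E$, so $H_{\min}(K|DE)=n$. The insider is permitted to correlate the $\ell$-bit message $A$ with $K$ arbitrarily, but the chain-rule estimate $H_{\min}(K|ADE)\geq H_{\min}(K|DE)-\ell = n-\ell$ bounds the damage, since $A$ is an $\ell$-bit register. This is the heart of the argument: choosing $n>2\ell$ guarantees that, even after the insider has spent up to $\ell$ bits correlating the message with the key, more than $\ell$ bits of min-entropy survive in $K$.

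Next I would invoke Lemma~\ref{lemma:leftoverhashing} in the orientation supplied by Lemma~\ref{lemma:2universal}: the fresh string $r$ plays the role of the independent seed (the resource condition $\rho_{RABCDE}=U^{(n)}_R\otimes\rho_{ABCDE}$ guarantees it is uniform and independent of everything), while the high-entropy key $k$ is the hash input $X$ and the $\ell$-bit $K'$ is the output $Z$. With side information $(A,D,E)$ and $\epsilon'=0$ this gives
\begin{equation}
\Delta(K'\,|\,ADER)\leq \tfrac{1}{2}\sqrt{2^{\ell-H_{\min}(K|ADE)}}\leq \tfrac{1}{2}\sqrt{2^{2\ell-n}}.
\end{equation}
Finally I would transport this from $K'$ to the ciphertext: conditioned on $A=a$ the map $K'\mapsto a\oplus K'$ is a permutation of $\{0,1\}^{\ell}$, so it fixes the uniform distribution and preserves trace distance; hence $\Delta(c\,|\,ADER)=\Delta(K'\,|\,ADER)$. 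Converting the half-trace-distance $\Delta$ into the unnormalised trace norm of Definition~\ref{def:ippc} supplies the factor of two, yielding $\norm{\cdot}_1\leq 2\Delta\leq\sqrt{2^{2\ell-n}}=\epsilon$.

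The main obstacle is the bookkeeping rather than any single hard inequality. I expect the delicate points to be: (i) verifying that the orientation of the hash is legitimate, namely that $r$ really is independent of both the input and the side information while $k$ carries the entropy --- this is exactly what the two resource conditions and the symmetry remark after Lemma~\ref{lemma:2universal} secure; and (ii) matching the approximating state from the leftover hashing lemma (uniform tensored with the reduced state on $ADER$) with the state the simulator can actually output, since $\Psi_{CE}$ may only write $C$ and touch $E$, not $A$ or $D$. Keeping the conditioning registers aligned so that the independence of $c$ from $A$, $D$, $E$ and $R$ holds \emph{simultaneously}, and not merely marginally, is what makes a single application of the leftover hashing lemma suffice.
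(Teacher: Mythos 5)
Your proposal follows the paper's own route almost step for step: the same simulator (append fresh uniform $C$ and $R$, touching nothing else), the same chain-rule bound $H_{\min}(K|ADE) \geq n-\ell$, the same orientation of the hash (fresh $r$ as seed, high-entropy $k$ as input, side information $ADE$), and the same observation that XORing with $A$ is, conditioned on $A=a$, a permutation that fixes the uniform distribution. The one place you diverge is the final bookkeeping, and there is a genuine gap there. The quantity $\Delta$ in Lemma~\ref{lemma:leftoverhashing} is defined with a \emph{minimization} over $\sigma_B$, so what the lemma gives you is closeness of the real state to $U_{K'} \otimes \sigma_{ADER}$ for some unspecified minimizer $\sigma_{ADER}$ --- not, as your point (ii) asserts, to uniform tensored with the \emph{actual} reduced state on $ADER$. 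Consequently your concluding inequality $\norm{\cdot}_1 \leq 2\Delta \leq \sqrt{2^{2\ell-n}}$ bounds the distance to $U_C \otimes \sigma_{ADER}$, which is not a state of the form $\Phi^{IPC}_{AB} \otimes I_D \otimes \Psi_{CE}(\rho)$ demanded by Definition~\ref{def:ippc}: the simulator is forbidden to act on $A$ and $D$, so it cannot manufacture an arbitrary correlation pattern $\sigma_{ADER}$ there. As written, the chain of inequalities never connects to the definition of the insider-proof channel.

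The repair is exactly the step the paper inserts and you omit: trace $C$ out of the hashing bound to obtain $\norm{\sigma_{ADER} - \rho_{ADER}}_1 \leq \epsilon_{hash}$, note that $\rho_{ADER} = \rho_{ADE} \otimes U_R$ holds exactly (the protocol never disturbs these registers and $R$ is uniform and independent by the resource conditions), and then use the triangle inequality to replace $\sigma_{ADER}$ by the simulator's genuine output $\rho_{ADE} \otimes U_C \otimes U_R$. This costs a second application of $\epsilon_{hash}$, and that is where the paper's factor of two in $\epsilon = 2\epsilon_{hash} = \sqrt{2^{2\ell-n}}$ actually comes from --- not from converting the normalized distance $\Delta$ into an unnormalized trace norm, as your proposal claims. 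Alternatively, your one-step argument can be salvaged by invoking the variant of the leftover hashing lemma stated against the actual marginal (as in the cited TSSR11 formulation) rather than against the minimizer; either fix is routine and changes only constant factors, but one of them is needed for the proof to be complete.
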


\begin{proof}

We begin by reducing to an equivalent protocol by noting that, so long as Bob completes the protocol before interacting with outside parties, his operations commute with Eve's.  Hence we may assume that Eve receives her copy of $(c,r)$ after Bob has completed the protocol.  This solves certain notational problems where we need to trace out registers in the proper sequence in order to obtain valid bounds.  Also, in this version of the protocol, we make explicit the movement of registers between different parties.

\begin{protocol}\label{protocol2}
Input for Alice: Registers $A$ and $K$.  Input for Bob: Register $J$.
\begin{enumerate}
	\item Alice uses her random number generator to initialize $R$ with a uniformly random string.
	\item Alice calculates $K^{\prime} = (K \cdot R) \mod 2^{\ell}$, then discards $K$ and sends $R$ to Bob.
	\item Bob calculates $J^{\prime} = (J \cdot R) \mod 2^{\ell}$, then discards $J$.
	\item Alice calculates $C = K^{\prime} \oplus A$ and then discards $K^{\prime}$ and sends $C$ to Bob.
	\item Bob calculates $B = C \oplus J^{\prime}$ and then discards $J$.
	\item Bob passes $C$ and $R$ to Eve.
\end{enumerate}
\end{protocol}

Here $K$ contains the private shared key, of which $J$ is Bob's copy.  

Next we make a further reduction.  At the end of step 5, Bob's state consists solely of $B$, which is a copy of $A$.  Hence we can instead simply apply $\Phi^{IPC}_{AB}$ at the end of the protocol and remove all of Bob's operations, as well as $J$.  Then Alice can simply send $C$ and $R$ directly to Eve.  Hence we arrive at the following protocol

\begin{protocol}\label{protocol3}
Input for Alice: Registers $A$ and $K$.
\begin{enumerate}
	\item Alice uses her random number generator to initialize $R$
	\item Alice calculates $K^{\prime} = (K \cdot R) \mod 2^{\ell}$ and discards $K$.
	\item Alice calculates $C = A \oplus K^{\prime}$ and discards $K^{\prime}$.
	\item Alice sends $C$ and $R$ to Eve.
	\item Alice and Bob apply $\Phi^{IPC}_{AB}$
\end{enumerate}

\end{protocol}

Now we proceed with the security proof.  Let $\ell = |A|$ and $n = |K| = |R|$.  For notational convenience we assume that $C$ is created in step 3, and $B$ is created in step 5, so we need not keep track of them beforehand.  Let the quantum state just after step $t$ be $\rho^{(t)}$ and the $\epsilon$-smooth min-entropy be $H_{\min}^{\epsilon}$.  We suppose for the moment that the key in register $K$ is perfectly independent from Eve.

After step 1, since $K$ is secret from Eve, $H_{min}^{0}(K | DE)_{\rho^{(1)}} = n$.  By the chain rule for smooth entropies \cite{rennerthesis}, we also have 
\begin{equation}
H_{min}^{0}(K|ADE)_{\rho^{(1)}} \geq n - \ell.
\end{equation}
In step 2 we apply the 2-universal hash given in lemma~\ref{lemma:2universal}, tracing out $K$ and producing encryption key $K^{\prime}$ of length $\ell$.  Using the leftover hashing lemma we find
\begin{equation}
\Delta(K^{\prime}|ADER)_{\rho^{(2)}} \leq \frac{1}{2} \sqrt{2^{2\ell - n}} = \epsilon_{hash}
\end{equation}
and hence there exists a $\sigma_{ADER}$ such that
\begin{equation}
\norm{\rho^{(2)}_{K^{\prime}ADER} - U_{K^{\prime}} \otimes \sigma_{ADER}}_{1} \leq \epsilon_{hash}.
\end{equation}

Since for $U_{K^{\prime}} \otimes \sigma_{ADER}$ $A$ is independent of $K^{\prime}$, we can XOR them together in step 3 to obtain the ciphertext $C$ which is again independent.  We trace out $K^{\prime}$ and then $C$ and $R$ are sent to Eve in step 4.  We find 
\begin{equation}
\norm{\rho^{(4)}_{CADER} - U_{C} \otimes \sigma_{ADER}}_{1} \leq \epsilon_{hash}.
\label{eq:rho4}
\end{equation}

Next we want to approximate $\sigma_{ADER}$.  Tracing out the $C$ register, the above inequality becomes $\norm{\rho^{(4)}_{ADER} - \sigma_{ADER}} \leq \epsilon_{hash}$.  Since $\rho^{(4)}_{ADER} =  \rho^{(1)}_{ADER} =  \rho_{ADE} \otimes U_{R}$ we then obtain 
\begin{equation}
\norm{U_{C} \otimes \sigma_{ADER} - U_{C} \otimes \rho^{(0)}_{ADE}Ê\otimes U_{R}} \leq \epsilon_{hash} \ .
\label{eq:unif}
\end{equation}
Now $U_{C} \otimes \rho_{ADE} \otimes U_{R}$ is a state that Eve can create by herself by operating only on her registers by simply appending $C$ and $R$ distributed uniformly.  Let us call this operation $\Psi$.  Using the triangle inequality to combine~(\ref{eq:rho4}) and~(\ref{eq:unif}),
\begin{equation}
\norm{\rho^{(4)}_{ACDER} - I_{AD} \otimes \Psi_{CER}(\rho^{(0)}) }_{1} \leq 2\epsilon_{hash} = \epsilon \ .
\end{equation}
We now introduce register $B$ and after step 5, this becomes
\begin{equation}
\norm{\rho^{(5)}_{ABCDER} - \Phi^{IPC}_{AB} \otimes I_{D} \otimes \Psi_{CER}(\rho^{(0)}) }_{1} \leq \epsilon \ .
\label{eq:secbnd}
\end{equation}
Hence the protocol implements a $2\epsilon_{hash}$-insider-proof private channel.   
\end{proof}

\section{Application to DIQKD}
We now consider the application of the $\epsilon$-insider proof channel to DIQKD in the context of reused devices with memory.

\subsection{The model}
\label{sec:model}

Alice and Bob share some private randomness and would like to grow more key from it using a shared quantum state.  However, they do not trust their measuring devices or the state; in fact, they assume that Eve has built the devices and distributes the quantum state.  Let us assume that it is possible for them to complete a device-independent quantum key distribution (DIQKD) protocol securely in this setting.  There is some recent work that supports this assumption~\cite{BCK12-2,RUV12-2,RUV12-1}.  They successfully grow some new key on which Eve's knowledge is bounded to be less than $\epsilon$, quantified using standard trace distance metrics~\cite{rennerthesis}.  After this, they would like to \emph{reuse their devices} to grow more key in another round, but the malicious devices are allowed to have memories.  As well, all shared randomness used in the protocol will be taken from the previously generated keys, and hence is also shared with the devices.%
\footnote{
At the very least, the devices can know the raw keys from previous rounds, and hence are strongly correlated with the final keys.}
We would like to know whether Alice and Bob can grow new key in this situation.  

We make the standard assumptions of DIQKD.  We are working in the limit of long keys for each run of the protocol.  We assume that the untrusted devices can be isolated within Alice and Bob's laboratories, such that they can receive arbitrary quantum signals from Eve, but can signal only to Alice and Bob and not directly to Eve.  We also assume that Alice and Bob can both generate trusted randomness locally.  Additionally, we assume Alice and Bob can perform classical processing privately from the untrusted measuring devices in their labs.  

This model was first introduced in~\cite{BCK12}, where the authors argue that in standard protocols Alice and Bob cannot grow further key using the same devices.  Particularly, they highlight the issue of whether the protocols are composable.  We show how to modify standard DIQKD protocols to eliminate side channels related to Alice and Bob's public discussion and show that they can still grow new secret key.  We comment on the issue of composability in section~\ref{sec:aborts}.

\subsection{The protocol}
\label{sec:protocol}
The modifications we propose are restricted to the classical post-processing portions of the protocol.  The goal of the changes is to prevent the device from having a communication channel back to Eve within the protocol itself.  To this end, we make use of an $\epsilon$-insider-proof channel to send all information between Alice and Bob that the untrusted devices may have influenced.  (We assume no other side channels.)

Our modification applies to DIQKD protocols with standard classical post-processing~\cite{rennerthesis}.  Importantly, with standard post-processing the only information communicated between Alice and Bob which depends on the quantum devices are the parameter estimation data, the error correction data, and the abort flag. 

\begin{enumerate}
\item  Eve distributes an entangled state $\rho_{ABE}$ to the devices in Alice and Bob's labs.  Alice and Bob supply random (and independent) lists of basis choices to the devices for the series of measurements and the devices output the results.  
\item  Alice announces her basis selections publicly to Bob.  Where they have chosen the same basis, the measurement result bit should be correlated for Alice and Bob and can become part of the key.  When they have chosen different bases, they can check for CHSH violation or perform other parameter estimations.
\item  Alice must send to Bob a subset of her outcomes of size $\ell$.  To do this, they use protocol~\ref{protocol} to implement an $\epsilon$-insider-proof private channel, which must not leak information about previously grown keys (or other private data), $d$.  The message string $a = a(k,d)$ is passed from Alice to Bob, encrypted.  To do this, she generates a random string $r$ ($|r| = n)$ and chooses a string $k$ ($|k| = n$) from her store of previously generated keys.  She uses the type of 2-universal hash function introduced in lemma~\ref{lemma:2universal} to create ciphertext $c  = a \oplus (k \cdot r) \mod 2^{\ell}$.  She sends this to Bob along with $r$.  Bob uses $r$ and $k$ to recover $a$.  
\item Bob performs parameter estimation.  He sends a similarly encrypted message to Alice containing a flag bit indicating abort or not, and if not, a second encrypted message containing the detected bit error rate $Q$, the observed parameters, and an appropriate error correction function, along with his parity check bits.  Bob pads this communication with randomness, so it is always of fixed length.  If they instead will abort, Bob sends the abort flag and a random message instead of the error correction information. 
\item  Alice uses the information to correct her string to Bob's. 
\item  Using a publicly chosen hash function they perform privacy amplification to reduce Eve's knowledge of the final key below a chosen bound.  They discard the session encryption key $k$ used in the protocol.
\end{enumerate}

We now show the security of this protocol.

\subsection{Security of QKD using an insider-proof channel}
\label{sec:qkdsec}

In order to complete a QKD protocol Alice and Bob will require a series of communication channels back and forth which they have authenticated.  When the devices in Alice and Bob's labs may have some sensitive information in their memories, then some of these channels must be private channels, in order to show security.

\begin{figure}[h!]
\label{fig:indiqkd}
\includegraphics[scale=0.65]{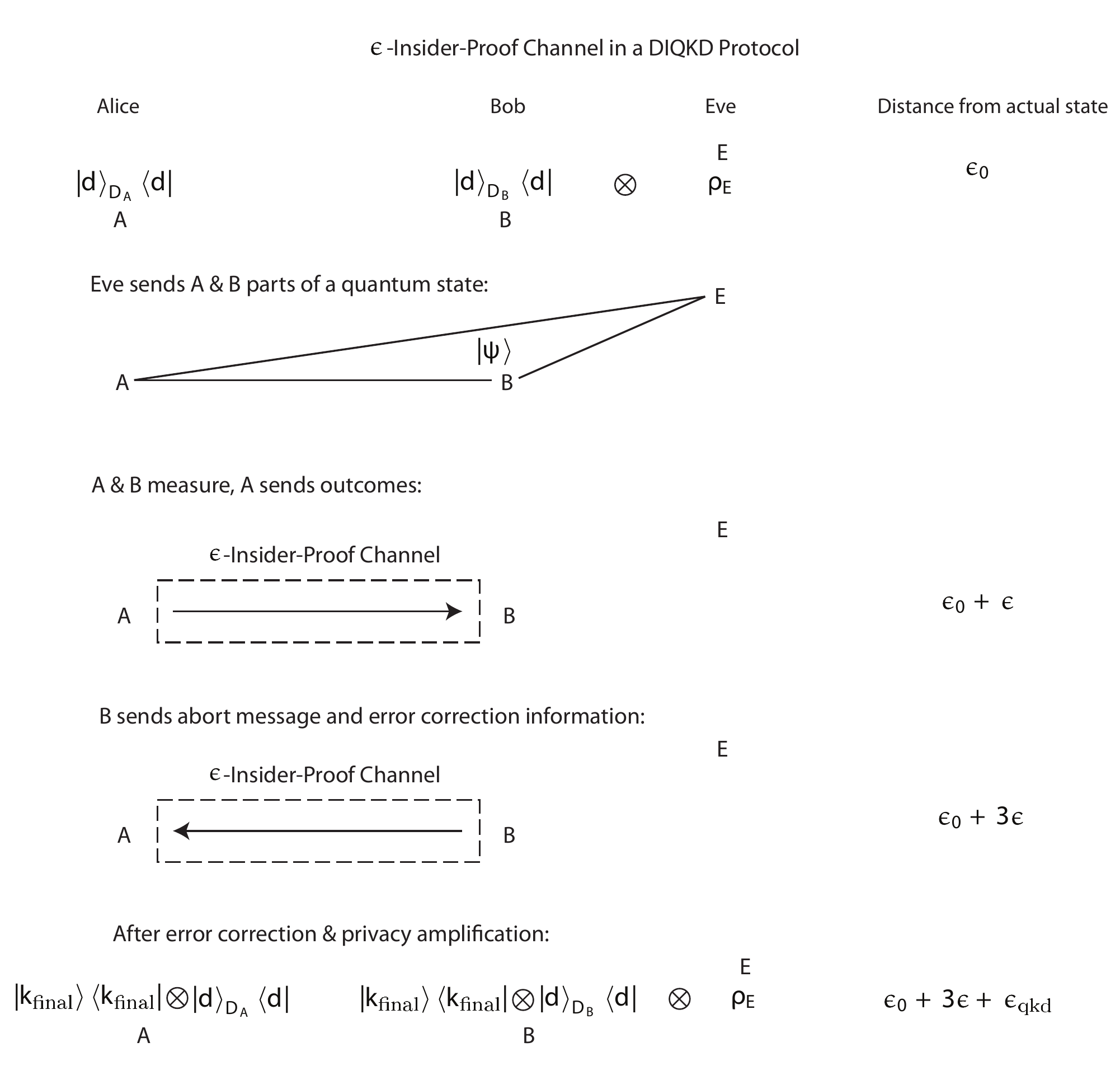}
\caption{Use of the insider-proof channel in a device independent quantum key distribution protocol.}
\end{figure}

Again, let the quantum state just after step $t$ be $\rho^{(t)}$.  At first, let us analyze the protocol assuming we start with a perfect key so that $p^{(0)} = U_{K}^{(3n)} \otimes \sigma_{D} \otimes \tau_{E}$.  After step 1, Alice and Bob share with Eve the state $\rho_{ABE}$.  They pick measurements and get outcomes in registers $A'$ and $B'$, so that their shared state becomes $\rho^{(1)}$, where
\begin{eqnarray}
\rho^{(1)} &=& U_{K}^{(3n)} \otimes \sigma_D \ \otimes \sum_{o_{m_{A}},o'_{m_{B}}} p(o_{m_{A}},o'_{m_{B}}) \ket{o_{m_{A}}}_{A'}\bra{o_{m_{A}}}     \nonumber  \\
& & \qquad \qquad \qquad \qquad \qquad \otimes \ket{o'_{m_{B}}}_{B'}\bra{o'_{m_{B}}} \otimes \rho^{(o_{m_{A}},o'_{m_{B}})}_{E} \ .
\end{eqnarray}

Now in step 2, Alice uses a public channel to send Bob her measurement choices $m_A$ and Bob can also use a public channel to send Alice his choices $m_B$.  Alice will prepare a private message for Bob that includes a subset $a$ of her outcomes $o_{m_{A}}$.  She then implements (in step 3) an insider-proof quantum channel to Bob, according to protocol~\ref{protocol3}.  

We can alter $\Phi_{ABC}$ to take the string in register $K$ as part of the input state rather than a parameter that defines $\Phi_{ABC}$.  In all other respects, the channel is unchanged.  Let the new channel be $\Phi'_{ABC}$.  Then from definition~\ref{def:ippc},
\begin{equation}
\norm{\Tr_K \Phi_{AB}^{IPC} \otimes I_{D} \otimes \Psi_{CE}(U_K\otimes\rho) - 
\Tr_K \Phi'_{ABC} \otimes I_{DE}(U_K\otimes\rho)}_{1} \leq \epsilon \ .
\label{eq:chbnd}
\end{equation}

Let the register $A$ contain the subset of outcomes (so $a$ is a function of $o_{m_{A}}$.  Then after an ideal private channel the state will be $\rho^{(3)}$ such that:
\begin{equation}
\norm{ \rho^{(3)} - \xi^{(3)}}_1 \leq \epsilon \ .
\end{equation}
where
\begin{eqnarray}
\xi^{(3)} &:=& U_K^{(2n)} \otimes \sigma_D \otimes \sum_{o_{m_{A}},o'_{m_{B}}} p(o_{m_{A}},o'_{m_{B}}) \ket{a}_A\bra{a} \otimes \ket{o_{m_{A}}}_{A'}\bra{o_{m_{A}}}  \nonumber   \\
& & \qquad \qquad \otimes \ket{a}_B\bra{a} \otimes \ket{o'_{m_{B}}}_{B'}\bra{o'_{m_{B}}} \otimes \rho^{(o_{m_{A}},o'_{m_{B}})}_{E}\otimes \II_{C,R} \ .
\end{eqnarray}
Bob will also have to reply in step 4, again using an insider-proof channel twice.  First he sends a one-bit message about whether to abort and second he sends the error correction information.  For an ideal private channel:
\begin{equation}
\norm{\rho^{(4)} - \xi^{(4)}}_1 \leq 3\epsilon \ .
\end{equation}
where
\begin{eqnarray}
\xi^{(4)} &:=& \sigma_D \otimes \sum_{o_{m_{A}},o'_{m_{B}}} p(o_{m_{A}},o'_{m_{B}}) \ket{b}_A\bra{b} \otimes \ket{o_{m_{A}}}_{A'}\bra{o_{m_{A}}}   \nonumber \\
& & \qquad \ \ \otimes \ket{b}_B\bra{b} \otimes \ket{o'_{m_{B}}}_{B'}\bra{o'_{m_{B}}} \otimes \rho^{(o_{m_{A}},o'_{m_{B}})}_{E} \otimes (\II_{C,R})^{\otimes 3}  \, .
\end{eqnarray}
At this point, they arrive at identical raw keys with probability $1-\epsilon_{EC}$, where Alice and Bob can choose $\epsilon_{EC}$ arbitrarily small.  Then, 
\begin{equation}
\norm{ \rho^{(5)} - \xi^{(5)}}_1 \leq  3\epsilon + \epsilon_{\text{EC}}  + \epsilon_{\text{PE}} \ ,
\end{equation}
defining
\begin{equation}
\xi^{(5)} := \sigma_D \otimes \sum_{o_{m_{A}},o'_{m_{B}}} p(o_{m_{A}},o'_{m_{B}}) \ket{k_{\text{raw}}}
_{A'}\bra{k_{\text{raw}}} \otimes \ket{k_{\text{raw}}}_{B'}\bra{k_{\text{raw}}} \otimes \rho^{(o_{m_{A}},o'_{m_{B}})}_{E} \ ,
\end{equation}
where we dropped the $C$ and $R$ registers for convenience, and $k_{\text{raw}}$ still depends on $o_{m_{A}}$ and $o'_{m_{B}}$.
They then implement a privacy amplification hash in step 6 and let us define $\epsilon_{\text{qkd}} = \epsilon_{\text{EC}} + \epsilon_{\text{PE}} + \epsilon_{\text{PA}}$.  So now we are left with a state $\rho^{(6)}_{A'B'CDER}$ such that:
\begin{equation}
\norm{\rho^{(6)}_{A'B'CDER} - \mathcal{U}_{A'B'} \otimes \sigma_{D} \otimes \tau'_{CER} }_{1} \leq  3\epsilon+\epsilon_{\text{qkd}} \ .
\label{eq:second}
\end{equation}
where $\mathcal{U}_{A'B'}$ is the normalized uniform distribution over all strings of a given length and we have followed the standard analysis (see \emph{eg.}~\cite{rennerthesis}) for the overheads of a single round of QKD.  We can write this instead as
\begin{equation}
\norm{\Phi_{\text{prot}}(U_{K}\otimes \sigma_{D} \otimes \tau_{E}) - \Phi_{\text{ideal}}(U_{K}\otimes \sigma_{D} \otimes \tau_{E}) }_{1} \leq  3\epsilon+\epsilon_{\text{qkd}} \ .
\label{eq:second}
\end{equation}
where $\Phi_{\text{prot}}$ is the action of the entire modified QKD protocol and $\Phi_{\text{ideal}}$ is an ideal protocol that shares key between Alice and Bob while leaking nothing to Eve.  

Now let us relax the assumption of a perfect key.  Instead, assume that Alice and Bob have already successfully grown some key using a DIQKD protocol, secure against malicious devices with memory.  Before step 1, we assume that Eve has bounded correlations with these keys:
\begin{equation}
\norm{\rho_{KDE} - U_{K} \otimes \sigma_{D} \otimes \tau_{E} }_{1} \leq \epsilon_0 \ .
\label{eq:init}
\end{equation}
We can apply $\Phi_{\text{prot}}$ to both states in the above bound.  Then using the data processing inequality, we have
\begin{equation}
\norm{\Phi_{\text{prot}}(\rho_{KDE}) - \Phi_{\text{prot}}(U_{K} \otimes \sigma_{D} \otimes \tau_{E}) }_{1} \leq \epsilon_0 \ .
\label{eq:dp}
\end{equation}
We can use the triangle inequality on equations~(\ref{eq:second}) and~(\ref{eq:dp}), to finally obtain
\begin{equation}
\norm{\Phi_{\text{prot}}(\rho_{KDE}) - \Phi_{\text{ideal}}(U_{K} \otimes \sigma_{D} \otimes \tau_{E}) }_{1} \leq \epsilon_0 + 3\epsilon+\epsilon_{\text{qkd}} \ .
\label{eq:dp}
\end{equation}

Now, let us back up a minute and consider what happens if Alice and Bob need to abort in step 4.  Implementing the insider-proof channel uses up their store of private key.  Asymptotically, the largest amount of key will be used to send the error correction information.  However, if they abort, there is no need to send this.  By using separate applications of the channel, after sending the signal to abort, Bob is free to not use the insider-proof channel and instead send a random string.  This is fine, since referring to protocol~\ref{protocol} the contents of $R$ are uniformly random, and, looking at equation~(\ref{eq:rho4}), the contents of $C$ cannot be distinguished from a uniform string by the adversary, except with probability $\epsilon$.  Therefore, in the case of an abort, the largest share of the cost of establishing a insider-proof channel can be avoided by breaking up Bob's messages in this way.

\subsection{Composing rounds of the new protocol}
\label{sec:bounds}

In the previous section, we saw that reusing untrusted devices in a new round of QKD using the new protocol caused an increase in the security parameter of the new and old keys by $3\epsilon + \epsilon_{\text{qkd}}$.  For comparison, if the devices were trusted, and the original DIQKD protocol was used, this parameter would only have grown by $\epsilon_{\text{qkd}}$.

Then composing $s$ rounds of successful key growth together, 
\begin{equation}
\norm{\Phi_{\text{prot}}^{\circ s}(U_{K}\otimes \sigma_{D} \otimes \tau_{E}) - \Phi_{\text{ideal}}^{\circ s}(U_{K}\otimes \sigma_{D} \otimes \tau_{E}) }_{1} \leq  3\epsilon+\epsilon_{\text{qkd}} \ .
\label{eq:newsecond}
\end{equation}
where $\Phi^{\circ s}$ means the channel $\Phi$ applied $s$ times.  Again using the data processing inequality for $s$ applications of $\Phi_{\text{prot}}$ on equation~(\ref{eq:init}) and then the triangle inequality with equation~(\ref{eq:newsecond}) gives
\begin{equation}
\norm{ \Phi_{\text{prot}}^{\circ s}(\rho_{KDE}) - \Phi_{\text{ideal}}^{\circ s}(U_{K}\otimes \sigma_{D} \otimes \tau_{E})  }_1  \leq \epsilon_0 + 3s\epsilon + s\epsilon_{\text{qkd}}  \, .
\end{equation}
This shows that each additional round can add at most $3\epsilon + \epsilon_{\text{qkd}}$ to Eve's information on the previously grown keys.  

Notice that if an abort occurs in round $i$, the new key is not obtained for that round, so the length of the final key string will depend on the number of aborts as well as the error rates.  However, Alice and Bob still sent two encrypted messages to each other in an aborted round, in order to learn that their error rate was above threshold.  Therefore, they still must add $3\epsilon$ for that round, though not $\epsilon_{\text{qkd}}$.  This means that the security parameter will grow even on aborted rounds.  

In practice, Alice and Bob should choose a maximum tolerated security loss of all of their keys $\epsilon_{\text{sec}}$.  This will determine the number of rounds they would be able to grow key in.  They should agree to this number of rounds when they begin to use their devices, then stop using and securely destroy the devices after that many rounds.  They do not wish to leak information to Eve about the number of rounds that have aborted.  (See section~\ref{sec:aborts} for further discussion.)

Note that this growth of the security parameter with the number of rounds is also seen in the standard trusted-device QKD models when some of the grown key is used for authentication in subsequent rounds.

\subsection{Asymptotic secret key rate}
\label{sec:accounting}

The application key rates achievable with this protocol modification will depend on the key rate of the underlying DIQKD protocol used, and $n$ the number of bits of the generated key that need to be used as the session keys for Alice and Bob's encrypted messages in the next round, and therefore cannot be used in other applications.  

Since we do not know the details of which DIQKD protocol can be used when the devices have memories, we remain agnostic about the exact rate, however, we can assume it would take a form:
\begin{equation}
r \geq f(S_{\text{obs}})  - H(A|B)
\label{eq:rate}
\end{equation}
for some function $f$ with $S_{\text{obs}}$ an observed parameter (\emph{eg.} a Bell-inequality violation) which is what is achieved by current protocols against memoryless devices~\cite{MPA11,HR10}.

In this new protocol, we do not need to remove the amount of communication $H(A|B)$ required for error correction, since this is encrypted.  However, we will remove the amount of key required to encrypt the next round's communication.  We now consider how much key this requires.  From theorem~\ref{thm1}, we have:
\begin{equation}
\epsilon = \sqrt{2^{2\ell - n}} = \frac{1}{2^{(n-2\ell)/2}} \, ,
\end{equation}

\noindent Then $n-2\ell = O(-\log\epsilon)$, so for a constant security parameter $\epsilon$, the key length, $n$, needs only exceed twice the message length, $2\ell$, by a constant number of bits.

Now we must determine how large the total amount of encrypted information sent between Alice and Bob must be asymptotically.  Suppose the sifted key length in one round is $N$.  The parameter estimation message from Alice to Bob must contain the bit values of an $O(\log N)$-size subset of this string in order to achieve an estimation error approaching zero.  As $N\rightarrow \infty$ the fraction of signals this represents goes to zero.  Bob must send to Alice his error correction function results, the size of which will depend on the error rate.  The amount of communication required will be $H(A|B)+f(\epsilon_{\text{EC}})$ bits, where $f(\epsilon_{\text{EC}})$ is a function of the security parameter for the error correction that does not depend on $N$, so that as $N\rightarrow \infty$ it also is negligible.  Finally, Bob's abort flag requires a constant sized key.  

In total, asymptotically, the amount of key needed to implement the insider-proof channels in the protocol depends only on the size of the error correction information to be shared.  Since we have $n\geq 2\ell +c$ where $c$ is a constant, the amount of key required is just twice the error rate: $2H(A|B)$.

Then we can see how the asymptotic key rate will change as compared with the original version of the protocol,   
\begin{equation}
r \geq f(S_{\text{obs}})  - 2H(A|B) \, .
\end{equation}
Notice that asymptotically the key rate does not fall as aborts occur, since in an abort, Bob will send the encoded abort flag, but will not encode the $H(A|B)$ bits of error correction information and rather save his key by sending a string output by his random number generator instead.  In the finite key regime however, it is clear that aborts will reduce the amount of generated key that can be used in other applications.

\subsection{Aborts}
\label{sec:aborts}

It may happen that on some rounds Alice and Bob must abort the protocol.  However, since the devices that Alice and Bob use can cause an abort even on a ``good'' state $\rho_{A'B'E}$, they can use this as a pretext to signal to Eve, as was observed in~\cite{BCK12}.  Therefore, Alice and Bob must hide aborts when they occur.  As explained in section~\ref{sec:protocol}, they can do this since they have encrypted the parameter estimation bits and will also encrypt Bob's signal as to whether or not to abort.  If they abort, they pretend to continue the protocol, but instead of exchanging encrypted information to perform error correction, they send random strings.  In this round they do not gain any additional key, but also Eve does not learn that they aborted.  

Another concern is that it is possible for the boxes to conduct a denial-of-service attack until Alice and Bob run out of key.  If this should occur before the number of rounds that Alice and Bob had agreed to use the devices for, this would also constitute a signal to Eve.  They must hide this also, so should it occur, Alice and Bob should simulate the remaining rounds of key growth (sending each other random strings) and then destroy the adversarial boxes securely.  This is not a foolproof solution however, since in the meantime Alice and Bob may need to communicate privately.  Thus at some point they will be forced to re-key and there is no reason to assume Eve will not notice this.  Therefore, it is conceivable that she may gain some information from the fact that this has happened and it seems there is no way to completely avoid that, though Alice and Bob could keep a piece of their initial authentication key from before the first round against this eventuality.  (This is similar to the case in trusted-device QKD when Eve executes repeated denial-of-service attacks on Alice and Bob until they run out of key.)

It appears that in this model we cannot think about each run of the device independent protocol as a stand-alone element in a universal composability scheme, in which it is public information how much key they have at any given time.  Alice and Bob certainly do not want to output on each round whether they succeeded or failed in obtaining key.  This may lead to additional considerations.  For example, the adversary may expect Alice and Bob to send a one-time-pad encoded message at a particular time during the multi-round life of the devices when they do not have key available to devote to the purpose.  If this occurs they can still avoid leaking information to the adversary by sending a random string of the appropriate length instead.  (However, this does not accomplish the communication task Alice and Bob presumably wished to accomplish.)  Note that in this case, Alice and Bob have to consider their quantum key distribution in the wider setting in which it is employed to avoid leaking information.  Nevertheless, when key is generated in the DIQKD scheme, the resulting key is secure under the trace distance definition given in~\cite{rennerthesis}.

\section{Conclusions}

We have introduced the concept of an insider-proof channel.  We hope that it will have applications, particularly in device-independent schemes where untrusted devices can be assumed not to have direct communication to the adversary, but may be malicious.  We construct an explicit example of such a channel that will allow trusted parties to communicate, even about information that the untrusted devices may have generated.  We also show how this can be used to reuse untrusted devices for many rounds of QKD.

The model of DIQKD assumed here gives a lot of power to the eavesdropper, since Eve is allowed to prepare Alice and Bob's measuring devices.  It is more restrictive to Alice and Bob than other models currently used to describe untrusted device scenarios, where their devices may have manufacturing flaws, but are assumed not to be outright malicious.  Those models more realistically represent most cryptographic scenarios today, wherein perhaps a user does not understand the cryptography implemented by his web browser, but he downloaded an authenticated copy from a legitimate business.  The business may not have correctly implemented the security, and this is what DIQKD would try to protect against, but it also does not benefit from gaining a reputation for selling users' credit card information to Eve.

However, this less-trusting model is interesting, first, because it provides bounds for what is possible in other more-trusting DI scenarios, and second, because despite its restrictions, QKD can still be performed without much loss of performance.  We have introduced a small modification to a DIQKD protocol that allows untrusted and malicious devices to be used in repeated round of secure key growth.  It is interesting to note that the only part of the protocol that requires modification is the classical post-processing.  This suggests that perhaps existing QKD protocols could be adapted to other new models readily, simply by considering this portion carefully.

There remain some open questions.  Are there other applications for insider-proof channels?  More specfically, are there other contexts where messages may be chosen maliciously and with knowledge of private data? 
  It also may be possible to improve the bounds presented here in order to get a higher asymptotic key rate.   It would also be nice to fit this type of protocol into a composability framework, although it is not clear how to do that in existing frameworks.  Additionally, there may be other modifications that could be made to existing protocols that accomplish this same task more efficiently.

{\bf Acknowledgements} This work is funded by the Centre for Quantum Technologies, which is funded by the Singapore Ministry of Education and the Singapore National Research Foundation, and by the University of Otago through a University of Otago Research Grant and the Performance Based Research Fund.  We thank Marco Tomamichel for a helpful discussion and Roger Colbeck for his comments about composability. 

\bibliographystyle{plain}
\bibliography{QKDBib13}

\appendix

\end{document}